\newcounter{suppFig}
\newenvironment{suppFig}{
\addtocounter{figure}{-1}
\refstepcounter{suppFig}

\begin{figure}}
{\end{figure}}
\newtheorem{theorem}{Theorem}
\newtheorem{corollary}{Corollary}
\newtheorem{definition}{Definition}
\newtheorem{lemma}{Lemma}
\theoremstyle{remark}
\newtheorem*{remark}{Remark}
\newcounter{suppTab}
\newenvironment{suppTab}{
\addtocounter{table}{-1}
\refstepcounter{suppTab}

\begin{table}}
{\end{table}}
\newcommand{\ket}[1]{\left| #1 \right>}
\begin{document}

\title{Experimental generalized quantum suppression law in Sylvester interferometers - Supplemental Material}

\author{Niko Viggianiello}
\affiliation{Dipartimento di Fisica, Sapienza Universit\`{a} di Roma,
Piazzale Aldo Moro 5, I-00185 Roma, Italy}

\author{Fulvio Flamini}
\affiliation{Dipartimento di Fisica, Sapienza Universit\`{a} di Roma,
Piazzale Aldo Moro 5, I-00185 Roma, Italy}

\author{Luca Innocenti}
\affiliation{Dipartimento di Fisica, Sapienza Universit\`{a} di Roma,
Piazzale Aldo Moro 5, I-00185 Roma, Italy}
\affiliation{Centre for Theoretical Atomic, Molecular and Optical Physics,
School of Mathematics and Physics, Queen's University, Belfast BT7 1NN, United Kingdom}

\author{Marco Bentivegna}
\affiliation{Dipartimento di Fisica, Sapienza Universit\`{a} di Roma,
Piazzale Aldo Moro 5, I-00185 Roma, Italy}

\author{Nicol\`o Spagnolo}
\affiliation{Dipartimento di Fisica, Sapienza Universit\`{a} di Roma,
Piazzale Aldo Moro 5, I-00185 Roma, Italy}

\author{Andrea Crespi}
\affiliation{Istituto di Fotonica e Nanotecnologie, Consiglio Nazionale delle Ricerche (IFN-CNR), 
Piazza Leonardo da Vinci, 32, I-20133 Milano, Italy}
\affiliation{Dipartimento di Fisica, Politecnico di Milano, Piazza Leonardo da Vinci, 32, I-20133 Milano, Italy}

\author{Daniel J. Brod}
\affiliation{Perimeter Institute for Theoretical Physics, 31 Caroline Street North, Waterloo, ON N2L 2Y5, Canada}
\affiliation{Instituto de F\'isica, Universidade Federal Fluminense, 
Av. Gal. Milton Tavares de Souza s/n, Niter\'oi, RJ, 24210-340, Brazil}

\author{Ernesto F. Galv\~{a}o}
\affiliation{Instituto de F\'isica, Universidade Federal Fluminense, 
Av. Gal. Milton Tavares de Souza s/n, Niter\'oi, RJ, 24210-340, Brazil}

\author{Roberto Osellame}
\affiliation{Istituto di Fotonica e Nanotecnologie, Consiglio
Nazionale delle Ricerche (IFN-CNR), Piazza Leonardo da Vinci, 32,
I-20133 Milano, Italy}
\affiliation{Dipartimento di Fisica, Politecnico di Milano, Piazza
Leonardo da Vinci, 32, I-20133 Milano, Italy}

\author{Fabio Sciarrino}
\affiliation{Dipartimento di Fisica, Sapienza Universit\`{a} di Roma,
Piazzale Aldo Moro 5, I-00185 Roma, Italy}

\maketitle

\section{Suppression laws for Sylvester interferometers}

In this Section we discuss in detail the suppression law for Sylvester interferometers adopted in the main text.

\subsection{Preliminary definitions}
\label{sec:notation}

Let us begin by defining some notation used throughout this Section. We are interested in the transition amplitudes between states of $n$ photons in $m$ modes, and restrict ourselves to the case $n \leq m$. We denote such states as $\ket{r} := \ket{r_1,\dots,r_m}$, where $r_k$ is the number of photons in the $k$th mode and $\sum_{k=1}^m r_k = n$. The set of all such states is denoted by $G_{n,m}$. We refer to the particular states where all $r_i$s take only values 0 or 1 as collision-free, and the set of all collision-free states is denoted by $Q_{n,m}$. It is easy to see that $|G_{n,m}|=\binom{m+n-1}{n}$ and $|Q_{n,m}|=\binom{m}{n}$. Let us now define two useful alternative representations for such states. 
\begin{definition} \label{def:Reps}
(MAL and BM representations). Let $\ket{r} = \ket{r_1,\dots,r_m} \in G_{n,m}$. We can define the following two representations of this state.
\begin{itemize}
\item Mode Assignment List (MAL): an $n$-dimensional vector $\tilde r := (\tilde r_1,\dots, \tilde r_n)$ listing the mode occupied by each particle.
\item Binary matrix (BM): States with $m = 2^p$, for some positive integer $p$, can also be represented as an $n \times p$ binary matrix $R$ as follows.
For a given MAL $\tilde r := (\tilde r_1,\dots, \tilde r_n)$ describing the state, the $i$th row of $R$ corresponds to the binary representation of $\tilde r_i-1$ (padded with zeros on the left so it has length $p$). 
\end{itemize}
Since photons are indistinguishable, the ordering of elements in the MAL is meaningless, and we conventionally choose the list to be in increasing order. Correspondingly, the ordering of the rows of two binary matrices is irrelevant, i.e.\ if $R$ and $R'$ are related only by a permutation of the rows, they represent the same physical state, and this we denote as $R \sim R'$.
\end{definition}

\begin{leftbar}
\textbf{Example 1}. \textit{For $m=8$ and $n=4$, consider the state $\ket{r} = \ket{1,1,1,0,0,0,0,1}$. It has the corresponding MAL $\tilde r = (1,2,3,8)$, and
its BM representation is the $4\!\times\!3$ matrix
\begin{equation}
R = \begin{pmatrix} 0&0&0\\0&0&1\\0&1&0\\1&1&1 \end{pmatrix}.
\end{equation}
}
\end{leftbar}

The following definitions are also convenient shorthands used throughout this Section.

\begin{definition} \label{def:negation}
Let $R$ be the $n\!\times\! p$ BM representation of some state $\ket{r}$. Let $A$ be some subset of the columns of $R$. We denote by $\mathcal{N}^A(R)$ the matrix obtained by flipping each bit in the columns of $R$ specified in $A$. If $\tilde r$ is the MAL associated with $R$, we denote by $\mathcal{N}^A(\tilde r)$ the MAL associated with $\mathcal{N}^A(R).$
\end{definition}

\begin{leftbar}
\textbf{Example 2}. \textit{
For $m=8$ and $n=4$, consider the state $\ket{r}=\ket{1,1,1,1,0,0,0,0}$, corresponding MAL $\tilde r=(1,2,3,4)$ and BM representation
\begin{equation}
R = \begin{pmatrix} 0&0&0\\0&0&1\\0&1&0\\0&1&1 \end{pmatrix}.
\end{equation}
Then we have the following possibilities when $A$ consists of a single element
\begin{equation}
\mathcal N^{\{1\}}(R) = \begin{pmatrix} 1&0&0\\1&0&1\\1&1&0\\1&1&1 \end{pmatrix},
\quad
\mathcal N^{\{2\}}(R) = \begin{pmatrix} 0&1&0\\0&1&1\\0&0&0\\0&0&1 \end{pmatrix},
\quad
\mathcal N^{\{3\}}(R) = \begin{pmatrix} 0&0&1\\0&0&0\\0&1&1\\0&1&0 \end{pmatrix}.
\end{equation}
Clearly $\mathcal N^{\{2\}}(R) \sim \mathcal N^{\{3\}}(R) \sim R$, since they are related to each other by permutations of their rows, whereas $\mathcal N^{\{1\}}(R)$ represents a different state. The set $A$ can also contain more than a single element. For example:
\begin{equation}
\mathcal N^{\{1,3\}}(R) = \begin{pmatrix} 1&0&1\\1&0&0\\1&1&1\\1&1&0 \end{pmatrix},
\qquad
\mathcal N^{\{1,2,3\}}(R) = \begin{pmatrix} 1&1&1\\1&1&0\\1&0&1\\1&0&0 \end{pmatrix},
\end{equation}
and so on.}
\end{leftbar}
 
\begin{definition}{\em (Sylvester matrices)} \label{def:Sylvester}
Let $U^S$ be an $m$-dimensional unitary matrix of the form $U^S(m) \equiv H(m)/\sqrt{m}$, with $H(m)$ defined recursively as
\begin{equation}
H(2^p) := \left( 
\begin{array}{ccc}
H(2^{p-1}) & H(2^{p-1}) \\
H(2^{p-1}) & -H(2^{p-1})
\end{array} \right),
\end{equation}
for each positive integer $p$, and with $H(1):=1$.
We refer to $U^S(m)$ as \textit{normalized Sylvester matrix} and to $H(m)$ as \textit{Sylvester matrix}.
The $m$ dependence of $U^S$ and $H$ will be omitted when clear from the context.
\end{definition}

An analytic expression for the $(i,j)$ element of a Sylvester matrix can be given in the form:
\begin{equation}
[H(2^p)]_{i,j} = (-1)^{i_B \odot j_B},
\label{eq:sylvesterMatrixElements}
\end{equation}
where $i_B$ and $j_B$ are the binary representations of $i-1$ and $j-1$ respectively,
and $\odot$ is the bitwise dot product.

\subsection{Suppression law for Sylvester matrices}

In this Section we present a test which generalizes that of \cite{Crespi15}, predicting a higher fraction of suppressed pairs. This test can be assessed with a computational cost increasing only polynomially in $m$ and $n$. A similar test was proposed in \cite{Dittel2017}, using a different formalism.

Let us begin with the following two straightforward Lemmas, the proof of which we leave as an exercise for the reader.

\begin{lemma} \label{lem:perms}
Let $S_n$ be the set of permutations of $\{1,\dots,n\}$, and let $\tau \in S_n$ be a permutation different from the identity such that $\tau^2 = \mathbb 1$. Then we can uniquely associate to each $\sigma \in S_n$ another (different) permutation $\sigma_\tau \equiv\tau \circ \sigma$, where $\circ$ denotes the composition of permutations. 	
\end{lemma}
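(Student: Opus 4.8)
The plan is to show that left‑composition with $\tau$ defines a fixed‑point‑free involution on $S_n$, which immediately yields the claimed one‑to‑one pairing. First I would note that $\sigma_\tau := \tau\circ\sigma$ is again an element of $S_n$, since $S_n$ is closed under composition. Then I would introduce the map $\Phi_\tau : S_n \to S_n$ defined by $\Phi_\tau(\sigma) = \tau\circ\sigma$, and check that it is an involution: by associativity of composition and the hypothesis $\tau^2 = \mathbb 1$ one has $\Phi_\tau(\Phi_\tau(\sigma)) = \tau\circ(\tau\circ\sigma) = (\tau\circ\tau)\circ\sigma = \mathbb 1\circ\sigma = \sigma$. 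Hence $\Phi_\tau$ is its own inverse, so in particular it is a bijection of $S_n$ onto itself; this is precisely the sense in which the association $\sigma \mapsto \sigma_\tau$ is unique, since $\sigma$ can be recovered from $\sigma_\tau$ as $\sigma = \tau\circ\sigma_\tau$.

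It then remains only to verify that $\sigma_\tau \neq \sigma$ for every $\sigma$. If instead $\tau\circ\sigma = \sigma$ held for some $\sigma$, composing on the right with $\sigma^{-1}$ would give $\tau = \mathbb 1$, contradicting the assumption $\tau \neq \mathbb 1$. Thus $\Phi_\tau$ has no fixed points, and combined with $\Phi_\tau^2 = \mathrm{id}$ this shows that $S_n$ is partitioned into disjoint two‑element orbits $\{\sigma,\sigma_\tau\}$.

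There is no genuine obstacle here: the statement is an elementary consequence of the group structure of $S_n$, and the only point requiring care is making explicit what ``uniquely associate'' means, namely that $\Phi_\tau$ is an involution without fixed points, so the pairing is consistent (if $\sigma' = \sigma_\tau$ then $\sigma = \sigma'_\tau$) and each permutation is matched with exactly one other. This is the reason the lemma is left as an exercise, and in the subsequent argument it will be used to group the terms of a permanent‑like sum over $S_n$ into cancelling pairs.
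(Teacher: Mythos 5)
Your proof is correct and is exactly the intended argument: the paper leaves this lemma as an exercise, and the natural solution is precisely to observe that $\Phi_\tau(\sigma)=\tau\circ\sigma$ is a fixed-point-free involution on $S_n$, which you verify completely (involution from $\tau^2=\mathbb 1$ and associativity, no fixed points from $\tau\neq\mathbb 1$ by right-cancellation). Your closing remark about pairing terms of the permanent expansion into cancelling pairs also correctly anticipates how the lemma is used in the proof of Theorem~\ref{thm:main}.
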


\begin{lemma} \label{lem:propsofN}
Let $\tilde r$, $R$ and $\mathcal N^A$ be as in Definitions \ref{def:Reps} and \ref{def:negation}. Then $\mathcal N^A(\tilde r) = \tilde r$ if and only if $\mathcal N^A(R) \sim R$. This, in turn, happens if and only if there is a permutation $\tau \in S_n$ such that $\mathcal N^A(R) = R^{\tau}$, where $R^{\tau}$ is obtained from $R$ by applying the permutation $\tau$ to the rows of $R$. Now, if $\tau$ is such a permutation, we have
\begin{enumerate}
\item $\tau \neq \mathbb 1$, 
\item $\tau^2=\mathbb 1$, 
\item for each $\sigma \in S_n$, $\mathcal N^A (R^\sigma) = R^{\tau \circ \sigma}$
\item all columns of $R$ in $A$ have an equal number of 1s and 0s,
\item all columns of $R$ not in $A$ have an even number of 1s and an even number of 0s.
\end{enumerate}
\end{lemma}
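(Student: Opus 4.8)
The plan is to establish the chain of equivalences first and then the five numbered properties, treating the BM picture as primary. The equivalence $\mathcal N^A(\tilde r) = \tilde r \iff \mathcal N^A(R) \sim R$ is essentially a restatement of the definitions in Definition \ref{def:negation}: the MAL and the BM are two encodings of the same state, the action of $\mathcal N^A$ is defined on the BM and transported to the MAL, and $\sim$ on BMs is by construction the relation ``same physical state'', which on MALs (taken in increasing order) is plain equality. Then $\mathcal N^A(R)\sim R$ means, by the definition of $\sim$ in Definition \ref{def:Reps}, exactly that $\mathcal N^A(R)$ is a row-permutation of $R$, i.e.\ $\mathcal N^A(R) = R^\tau$ for some $\tau \in S_n$; so far this needs no computation.

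Next I would pin down the properties of such a $\tau$. For (1), note that $\mathcal N^A$ with $A$ nonempty changes the entry in each affected column, and since the hypothesis of the surrounding discussion (and the fact that $\mathcal N^A$ is being used nontrivially) has $A \neq \emptyset$, $\mathcal N^A(R)$ differs from $R$ as a \emph{matrix}, so $\tau$ cannot be the identity — the point being that flipping a nonempty set of columns genuinely moves rows around. For (2): applying $\mathcal N^A$ twice returns the original matrix, $\mathcal N^A(\mathcal N^A(R)) = R$; but also $\mathcal N^A(R^\tau) = (\mathcal N^A(R))^\tau = R^{\tau\circ\tau}$, because negating a fixed set of columns commutes with permuting rows. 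Combining, $R^{\tau^2} = R$, and since distinct rows of $R$ must be distinct (the $n$ photons occupy $n$ distinct modes only in the collision-free case — here I need to be slightly careful: if $R$ has repeated rows, $\tau$ is not unique, but the Lemma only asserts existence of \emph{some} $\tau$ with $\tau^2=1$; one can always choose an involutive representative, or restrict to the collision-free setting where rows are automatically distinct and $\tau$ is forced). Item (3) is immediate from the same commutation identity: $\mathcal N^A(R^\sigma) = (\mathcal N^A(R))^\sigma = (R^\tau)^\sigma = R^{\tau\circ\sigma}$.

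For (4) and (5) I would argue column by column. Fix a column index $\ell$ and let $c$ be the $\ell$th column of $R$, a binary vector of length $n$; the $\ell$th column of $\mathcal N^A(R)$ is $c$ if $\ell \notin A$ and $\bar c$ (bitwise complement) if $\ell \in A$. Since $\mathcal N^A(R) = R^\tau$, the $\ell$th column of $R^\tau$ is $c$ permuted by $\tau$, call it $c_\tau$, which is a rearrangement of $c$ and hence has the same number of $1$s as $c$. If $\ell \in A$ we get $\bar c = c_\tau$, so $c$ and $\bar c$ have the same number of $1$s, forcing that number to be $n/2$: equal numbers of $0$s and $1$s, which is (4) (and incidentally forces $n$ even). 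If $\ell \notin A$ we get $c = c_\tau$, i.e.\ $c$ is fixed by $\tau$; since $\tau$ is an involution, it acts on the index set as a product of disjoint transpositions and fixed points, and a vector constant on each $\tau$-orbit which pairs indices $i \leftrightarrow \tau(i)$ must have $c_i = c_{\tau(i)}$ on each $2$-element orbit. Wait — that only says $c$ is constant on orbits, not that the count is even; the evenness comes in differently. The cleaner route: $\tau$ has no fixed points, because any fixed point $i$ of $\tau$ would give, for every $\ell\in A$, $\bar c_i = (c_\tau)_i = c_i$, impossible; so $\tau$ is a fixed-point-free involution, a perfect matching on $\{1,\dots,n\}$. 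For $\ell\notin A$, $c = c_\tau$ means $c$ is constant on each matched pair, so the $1$s (and the $0$s) come in pairs — hence even counts, which is (5). This last observation — that $\tau$ must be fixed-point-free, extracted from the $\ell\in A$ case and then fed into the $\ell\notin A$ case — is the one genuinely non-mechanical step, and it is where I expect to spend the most care; everything else is bookkeeping with the commutation identity $\mathcal N^A(R^\sigma) = (\mathcal N^A(R))^\sigma$ and the definition of $\sim$.
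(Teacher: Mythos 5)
The paper offers no proof to compare against: it explicitly leaves both Lemmas as ``an exercise for the reader,'' so your write-up is filling a genuine gap rather than paralleling an existing argument. Your proof is correct. The two equivalences are indeed definitional; the commutation identity $\mathcal N^A(R^\sigma)=(\mathcal N^A(R))^\sigma$ settles (3); the column-by-column comparison gives (4); and you correctly isolate the one non-trivial step, namely that any $\tau$ with $\mathcal N^A(R)=R^\tau$ must be fixed-point-free (a fixed point $i$ would force $1\oplus R_{i,\alpha}=R_{i,\alpha}$ for $\alpha\in A$), which combined with $\tau^2=\mathbb 1$ yields the pairing argument for (5). The only place worth tightening is your parenthetical hedge in (2): when $R$ has repeated rows, $R^{\tau^2}=R$ does not force $\tau^2=\mathbb 1$, and indeed the lemma as literally stated fails for an arbitrary choice of $\tau$ (a $4$-cycle can realize $\mathcal N^A(R)=R^\tau$ on a matrix with doubled rows). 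Your remedy --- choose an involutive representative --- is right but deserves one explicit sentence: since $\mathcal N^A$ acts as a fixed-point-free involution on binary strings and preserves the multiset of rows, the rows split into classes $\{v,\mathcal N^A(v)\}$ of equal multiplicity, and any bijection between the index sets of $v$ and $\mathcal N^A(v)$, together with its inverse, assembles into a fixed-point-free involution $\tau$ satisfying $\mathcal N^A(R)=R^\tau$. With that sentence added, (2) and hence (5) are fully justified in the general (collision) case, which matters because the paper's Theorem 1 and Corollary 2 are applied to states with collisions.
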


We are now ready to state the main result of the Section:

\begin{theorem} \label{thm:main}
Let $\ket{r}$ and $\ket{s}$ be two states of $n$ particles in $m=2^p$ modes, with corresponding MAL representations $\tilde r$ and $\tilde s$, and BM representations $R$ and $S$. If there is a subset $A$ of the columns of $R$ such that
\begin{subequations}
\label{eq:suppressionCondition}
\begin{empheq}[left=\empheqlbrace]{align}
& \mathcal N^A(R) \sim R, \label{eq:suppressionConditionForR}\\
& \displaystyle\bigoplus_{k=1}^n \bigoplus_{\alpha \in A} S_{k\alpha} = 1
\label{eq:suppressionConditionForS},
\end{empheq}
\end{subequations}
then the transition from $\ket{r}$ to $\ket{s}$ (and consequently also that from $\ket{s}$ to $\ket{r}$), when transversing the Sylvester interferometer, is suppressed.
\end{theorem}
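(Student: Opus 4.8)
The plan is to combine the standard permanent formula for bosonic transition amplitudes with the explicit form \eqref{eq:sylvesterMatrixElements} of the Sylvester matrix elements, and then to show that the resulting signed sum over permutations collapses in pairs. Concretely, let $M$ be the $n\times n$ matrix with entries $M_{ab}=U^S_{\tilde r_a,\tilde s_b}$; the amplitude for $\ket r\to\ket s$ equals $\mathrm{perm}(M)$ up to the nonzero normalization $\left(\prod_k r_k!\,\prod_k s_k!\right)^{-1/2}$, so it suffices to prove $\mathrm{perm}(M)=0$. Writing $R_a$ and $S_b$ for the $a$-th and $b$-th rows of the binary matrices $R$ and $S$ (i.e.\ the length-$p$ binary strings of $\tilde r_a-1$ and $\tilde s_b-1$), Eq.~\eqref{eq:sylvesterMatrixElements} gives $M_{ab}=m^{-1/2}(-1)^{R_a\odot S_b}$, hence
\begin{equation}
\mathrm{perm}(M)=\frac{1}{m^{n/2}}\sum_{\sigma\in S_n}\prod_{a=1}^n(-1)^{R_{\sigma(a)}\odot S_a}=\frac{1}{m^{n/2}}\sum_{\sigma\in S_n}(-1)^{g(\sigma)},\qquad g(\sigma):=\bigoplus_{a=1}^n R_{\sigma(a)}\odot S_a .
\end{equation}
Here I deliberately expand the permanent over $\prod_a M_{\sigma(a),a}$ rather than $\prod_a M_{a,\sigma(a)}$ (both give $\mathrm{perm}(M)$), so that the pairing of Lemma~\ref{lem:perms} will plug in without any relabeling.

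Next I would feed in the two hypotheses. By \eqref{eq:suppressionConditionForR} and Lemma~\ref{lem:propsofN} there is a permutation $\tau\in S_n$ with $\tau\neq\mathbb 1$, $\tau^2=\mathbb 1$, and $R_{\tau(a)}=R_a\oplus\chi_A$ for every $a$, where $\chi_A\in\{0,1\}^p$ is the indicator string of the column set $A$ (a restatement of $\mathcal N^A(R)=R^\tau$). By Lemma~\ref{lem:perms} the map $\sigma\mapsto\tau\circ\sigma$ is a fixed-point-free involution of $S_n$, so it partitions $S_n$ into pairs $\{\sigma,\tau\circ\sigma\}$. For any such pair, using $(u\oplus v)\odot w=(u\odot w)\oplus(v\odot w)$,
\begin{equation}
g(\tau\circ\sigma)=\bigoplus_{a=1}^n R_{\tau(\sigma(a))}\odot S_a=\bigoplus_{a=1}^n\bigl(R_{\sigma(a)}\oplus\chi_A\bigr)\odot S_a=g(\sigma)\ \oplus\ \bigoplus_{a=1}^n\bigl(\chi_A\odot S_a\bigr)=g(\sigma)\ \oplus\ \bigoplus_{a=1}^n\bigoplus_{\alpha\in A}S_{a\alpha}.
\end{equation}
Hypothesis \eqref{eq:suppressionConditionForS} forces the last double sum to equal $1$, so $g(\tau\circ\sigma)=g(\sigma)\oplus 1$ and $(-1)^{g(\sigma)}+(-1)^{g(\tau\circ\sigma)}=0$; summing over the pairs yields $\mathrm{perm}(M)=0$, i.e.\ the $\ket r\to\ket s$ amplitude vanishes. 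For the $\ket s\to\ket r$ direction — for which the hypotheses are not symmetric — it suffices to note that $U^S$ is symmetric, so the associated matrix is $M^{\mathsf T}$ and $\mathrm{perm}(M^{\mathsf T})=\mathrm{perm}(M)=0$, which also explains the parenthetical in the statement.

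I expect the only real difficulties here to be bookkeeping ones: keeping straight which of the several equivalent descriptions of $\tau$ supplied by Lemma~\ref{lem:propsofN} is the convenient one (the clean choice being $R_{\tau(a)}=R_a\oplus\chi_A$, with items~4--5 of that lemma not actually needed for this direction), matching the permanent expansion to the left-composition pairing of Lemma~\ref{lem:perms}, and — if one wants the theorem in full generality — observing that collisions in $\ket r$ or $\ket s$ merely produce repeated rows in $R,S$ together with a harmless nonzero normalization, leaving every step intact. The algebraic core is short: flipping the columns in $A$ multiplies each permutation's contribution by $(-1)^{\bigoplus_{a,\alpha\in A}S_{a\alpha}}$, while \eqref{eq:suppressionConditionForR} makes that same flip a mere permutation of the rows of $M$ and hence leaves $\mathrm{perm}(M)$ unchanged; one may equivalently phrase the whole argument without permutation-pairing, as $\mathrm{perm}(M)=\mathrm{perm}\bigl(M^{(\mathcal N^A)}\bigr)=-\mathrm{perm}(M)$, where $M^{(\mathcal N^A)}$ is the submatrix built from $\mathcal N^A(R)$ and $S$.
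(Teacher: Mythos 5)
Your proposal is correct and follows essentially the same route as the paper: expand the permanent of the Sylvester submatrix using Eq.~(\ref{eq:sylvesterMatrixElements}), use Lemmas~\ref{lem:perms} and \ref{lem:propsofN} to pair each $\sigma$ with $\tau\circ\sigma$ where $\mathcal N^A(R)=R^\tau$, and show condition~(\ref{eq:suppressionConditionForS}) flips the parity of the exponent so the pairs cancel. Your added remarks (the transpose argument for the reverse direction and the $\mathrm{perm}(M)=-\mathrm{perm}(M)$ rephrasing) are harmless embellishments of the same core argument.
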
 

\begin{proof}
The transition amplitude from $\ket{r}$ to $\ket{s}$ can be written as
\begin{equation}
\mathcal A := 
\frac{1}{\sqrt{r_1!\dots r_m! s_1!\dots s_m!}}
\operatorname{Per} \left(U_{r, s}\right),
\label{eq:scatteringAmplitudeWithPermanents}
\end{equation}
where $\operatorname{Per}(U)$ denotes the permanent of $U$. In the equation above, $U_{r, s}$ is the matrix defined element-wise as
\begin{equation}
\left[ U_{r, s} \right]_{i,j} := \frac{1}{\sqrt{m}}
\left[H (2^p)\right]_{\tilde r_i, \tilde s_j},
\label{eq:matrixInPermanentFormula}
\end{equation}
where, recall, $H(2^p)$ is the Sylvester matrix (cf.\ Definition \ref{def:Sylvester}). 

Using Eqs.\ (\ref{eq:sylvesterMatrixElements}), (\ref{eq:scatteringAmplitudeWithPermanents}), and (\ref{eq:matrixInPermanentFormula}), and denoting $\textbf M_i$ the $i$th row of matrix $M$, we obtain{}
\begin{equation}
\mathcal{A} =
D \sum_{\sigma \in S_n} \prod_{k=1}^n
(-1)^{\textbf R_{\sigma(k)} \odot \textbf S_k}
= D \sum_{\sigma \in S_n} (-1)^{\mathcal{E}(\sigma)},
\label{eq:sylvesterPermanentExpansion}
\end{equation}
where $\textbf B \odot \textbf C$ denotes the \textit{bitwise dot product} between vectors $\textbf B$ and $\textbf C$, defined as $\textbf B \odot \textbf C := \bigoplus_{\alpha=1}^p B_{\alpha} C_{\alpha}$, $D$ is a constant factor, and we defined
\begin{equation}
\mathcal{E}(\sigma)
\equiv \bigoplus_{k=1}^n \textbf R_{\sigma(k)}\odot \textbf  S_k
= \bigoplus_{k=1}^n \bigoplus_{\alpha=1}^p R_{\sigma(k),\alpha} S_{k,\alpha}.
\label{eq:sylvesterPermanentExponent}
\end{equation}
The actual value of $\mathcal{E}(\sigma)$ is unimportant here, only its parity matters. Since we are interested in whether $\mathcal A = 0$, we will ignore the constant factor $D$ in Eq. (\ref{eq:sylvesterPermanentExpansion}). 

Clearly, for $\mathcal A$ to vanish, we need exactly half of the permutations to be such that $(-1)^{\mathcal E(\sigma)}=1$. A necessary and sufficient condition for this to hold is if, for each permutation $\sigma$, we can uniquely assign another permutation $\sigma'$ such that $\mathcal E(\sigma') = 1 \oplus \mathcal E(\sigma)$. From Lemmas \ref{lem:perms} and \ref{lem:propsofN}, we know that if condition (\ref{eq:suppressionConditionForR}) holds we can uniquely associate to each $\sigma$ another permutation $\sigma_\tau \equiv \tau \circ \sigma$, where $\tau$ is the permutation such that $\mathcal N^A(R)=R^\tau$. Using $\sigma_\tau$ in Eq.\ (\ref{eq:sylvesterPermanentExponent}) we have
\begin{equation}
\mathcal E(\sigma_\tau) =
\bigoplus_{k=1}^n \bigoplus_{\alpha=1}^p
R_{\tau(\sigma(k)),\alpha} S_{k,\alpha} =
\bigoplus_{k=1}^n \left[
\left( \bigoplus_{\alpha \in A} R_{\tau(\sigma(k)),\alpha} S_{k,\alpha} \right)
\oplus
\left( \bigoplus_{\alpha \notin A} R_{\tau(\sigma(k)),\alpha} S_{k,\alpha} \right)
\right].
\label{eq:sylvesterEsigmatau}
\end{equation}
Using now the explicit expression for $\mathcal N^A(R)$ and Lemma \ref{lem:propsofN}, we have
\begin{equation}
\mathcal N^A(R^\sigma)=R^{\tau\circ\sigma}
\Longleftrightarrow
\begin{cases}
1\oplus R_{\sigma(k),\alpha} = R_{\tau(\sigma(k)),\alpha}, & \alpha \in A,\\
R_{\sigma(k),\alpha} = R_{\tau(\sigma(k)),\alpha}, & \alpha \notin A.
\end{cases}
\end{equation}
Inserting these into Eq.\ (\ref{eq:sylvesterEsigmatau}), we obtain
\begin{equation}
\mathcal E(\sigma_\tau) =
\mathcal E(\sigma) \oplus
\left[ \bigoplus_{k=1}^n \bigoplus_{\alpha \in A} S_{k,\alpha} \right] = \mathcal E(\sigma) \oplus 1,
\end{equation}
where in the last step we used Eq.\ (\ref{eq:suppressionConditionForS}). Using this last result into Eq. (\ref{eq:sylvesterPermanentExpansion}) we conclude that
\begin{equation}
\mathcal A = C \sum_{\sigma \in S_n} (-1)^{\mathcal E_{R,S}(\sigma)}
= C \,\,\, \sum_{\mathclap{\sigma \in S_n : \, \mathcal E(\sigma) \text{ even}}} \,\,\,
\left[ (-1)^{\mathcal E(\sigma)} + (-1)^{\mathcal E(\sigma_\tau)} \right]
= 0,
\end{equation}
which proves that the input-output pair $(\ket{r}, \ket{s})$ is suppressed. 
\end{proof}

\begin{leftbar}
\textbf{Example 4}. 
\textit{Consider the state $\ket{r}=\ket{1,1,1,1,0,0,0,0}$ from Example 2. We have $\mathcal N^{A} (\tilde{r})= \tilde r$ for $A=\{2\}, A=\{3\}$. and $A=\{2,3\}$. Theorem \ref{thm:main} predicts suppression of all output states whose BM representation has an odd number of 1s in either the second column, the third column, or in the second and third columns combined. For example, the states $ \tilde s =(3,6,7,8)$, $\tilde s =(2,6,7,8)$, and $\tilde s =(4,6,7,8)$, having BM representations
\begin{equation}
\begin{pmatrix} 0&1&0\\1&0&1\\1&1&0\\1&1&1 \end{pmatrix},
\qquad
\begin{pmatrix} 0&0&1\\1&0&1\\1&1&0\\1&1&1 \end{pmatrix},
\qquad \textrm{and} \qquad
\begin{pmatrix} 0&1&1\\1&0&1\\1&1&0\\1&1&1 \end{pmatrix},
\end{equation}
respectively, are all suppressed.}
\end{leftbar}

\begin{remark} (Efficiency)
To check if Theorem \ref{thm:main} applies to a given input-output pair, one has to verify condition \ref{eq:suppressionCondition} for each one of the $2^p-1=m-1$ possible (non-empty) subsets of the $p$ columns of $R$ and $S$, which requires only a polynomial (in $n$ and $m$) number of elementary operations. Hence, the proposed suppression law is efficiently verifiable. 
\end{remark}

While Eq.\ (\ref{eq:suppressionConditionForS}) gives a sufficient condition for an input-output pair to be suppressed, it is not necessary. For most input states, not \textit{all} suppressed outputs satisfy Theorem \ref{thm:main}. In the next Section we give estimates of the which fraction of all states our test identifies as suppressed.

\subsection{Estimates on fraction of suppressed states}
\label{sec:suppressedfraction}

In this Section, we give estimates on the fraction of suppressed input-output pairs identified by the conditions of Theorem \ref{thm:main}, particularly focusing on upper bounds and asymptotic limits (in the number of modes $m$ and photons $n$). Throughout this Section, we are still restricted to $m=2^p$ for some integer $p$, which means all binary matrices are $n\!\times\! p$, and to $n \leq m$. The fractions we obtain consider mainly the set of all possible states, $G_{n,m}$, which has $|G_{n,m}|=\binom{m+n-1}{n}$ elements. At the end of this Section we discuss the applicability of our results to the restriction of no-collision states. 

We begin by restating the two conditions of Theorem \ref{thm:main} informally, as they would be used in a test. For simplicity, we refer to the states $\ket{s}$ and $\ket{t}$ of the Theorem as input and output states, respectively, although any conclusions can be extended to the case where the roles of input and output are reversed. With this in mind, we note that Condition \ref{eq:suppressionConditionForR} concerns only inputs, and we restate it as:
\newline

\textbf {Condition I}: For a given input with BM representation $R$, check whether $R$ has any subsets of columns $A$ such that negating those columns of $R$ results in $R$ up to a permutation of the rows (i.e.\ $\mathcal N^A(R) \sim R$).
\newline

Let us call $\mathcal A$ the set of all such subsets of columns $A$. For the state in Example 2, for instance, $\mathcal A = \left \{ \{2\}, \{3\}, \{2,3\} \right\}$. If Condition I finds no such $A$, the test fails to identify any suppressed transitions. Also, as a consequence of Lemma \ref{lem:propsofN}, the test only works if $n$ is even. Assuming that Condition I yielded some non-empty $\mathcal A$, we can test for Condition \ref{eq:suppressionConditionForS}, which is a test only on the outputs and which we restate as:
\newline

\textbf {Condition II}: Consider any output with BM representation $S$ and any $A \in \mathcal A$. If the columns $A$ of $S$ contain an odd number of 1s, the transition from $R$ to $S$ is suppressed.
\newline

For simplicity, we begin by estimating how many output states are suppressed {\em given that} Condition I identified a non-empty set $\mathcal A$ for some input. Before that, we need one final definition. We say that the elements of $\mathcal A$ are {\emph independent} if none can be replaced by a sequence of the others. That is, if there is no $A \in \mathcal A$ and $A_1, A_2, \ldots, A_k \in \mathcal A \backslash \{A\}$ such that $\mathcal N^{A_1}(\mathcal N^{A_2}  \ldots (\mathcal N^{A_k}(X))) = \mathcal N^{A} (X)$ for all binary matrices $X$. To illustrate this, consider Example 2. There, $\mathcal A = \left \{ \{2\}, \{3\}, \{2,3\} \right\}$. Clearly, negating columns $\{2,3\}$ of a binary matrix is the same as negating column $\{2\}$ followed by column $\{3\}$. Furthermore, Condition II can only be satisfied for $\{2,3\}$ if it is satisfied by either $\{2\}$ or $\{3\}$. Thus, including $\{2,3\}$ in $\mathcal A$ does not give any new suppressions beyond those identified by $\{2\}$ and $\{3\}$, so we can safely drop it (we could have dropped either $\{2\}$ or $\{3\}$ instead, to the same effect). Since the binary matrices have $p$ columns, there can be at most $p$ independent elements in $\mathcal A$. We are now ready to state the following.

\begin{corollary} \label{cor:outputfraction}
	Let $\mathcal A$ be the set identified by Condition I for some input state, and suppose it contains $q$ independent elements. Then the fraction of outputs in $G_{n,m}$ (i.e.\ including collision states) that Condition II identifies as suppressed is equal to $1 - \frac{1}{2^q} + \textrm{O}\left(\frac{\log{m}}{n}\right)$.
\end{corollary}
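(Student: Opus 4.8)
The plan is to count, exactly and then approximately, how many output states $\ket{s}\in G_{n,m}$ satisfy Condition II for at least one $A$ among the $q$ independent elements of $\mathcal A$. Fix a set $\{A_1,\dots,A_q\}$ of independent representatives. For an output with BM representation $S$, define the parity vector $v(S)\in\mathbb F_2^q$ whose $i$th entry is $\bigoplus_{k,\alpha\in A_i}S_{k\alpha}$, i.e.\ the parity of the number of $1$s of $S$ in the columns of $A_i$. Condition II is satisfied (for some $A_i$) precisely when $v(S)\neq 0$. So the suppressed fraction is $\Pr[v(S)\neq 0]=1-\Pr[v(S)=0]$, where the probability is over $\ket{s}$ drawn uniformly from $G_{n,m}$. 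The corollary thus reduces to showing $\Pr[v(S)=0]=\frac{1}{2^q}+\mathrm{O}\!\left(\frac{\log m}{n}\right)$.

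First I would reduce to a statement about collision-free states. Since $|G_{n,m}\setminus Q_{n,m}|/|G_{n,m}| = \mathrm{O}(n^2/m)=\mathrm{O}(n/ \,\cdot)$ is small when $n\ll m$, but here $n$ can be as large as $m$, a cleaner route is to work directly with the MAL: choosing $\ket{s}\in G_{n,m}$ uniformly is equivalent to choosing a multiset of $n$ modes from $\{1,\dots,m\}$ uniformly. Each mode contributes, via its binary expansion, a fixed vector in $\mathbb F_2^q$ (its $i$th coordinate being the $A_i$-parity of that mode's length-$p$ binary string), and $v(S)$ is the $\mathbb F_2$-sum of these $n$ vectors over the multiset. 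The key combinatorial input is that, because $A_1,\dots,A_q$ are independent subsets of the $p$ columns, the induced linear map $\mathbb F_2^p\to\mathbb F_2^q$ sending a column-index-parity pattern to its $A$-parities is surjective; consequently, as $j$ ranges over $0,\dots,m-1$, the vector $w_j\in\mathbb F_2^q$ assigned to mode $j+1$ takes each value in $\mathbb F_2^q$ exactly $m/2^q$ times. This equidistribution of the per-mode labels is the heart of the estimate.

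Next I would estimate $\Pr[v(S)=0]$ using this equidistribution. One natural way is the Fourier/character method: for each character $\chi_c$ of $\mathbb F_2^q$ indexed by $c\in\mathbb F_2^q$,
\begin{equation}
\Pr[v(S)=0]=\frac{1}{2^q}\sum_{c\in\mathbb F_2^q}\mathbb E\!\left[\chi_c(v(S))\right]
=\frac{1}{2^q}\sum_{c}\mathbb E\!\left[(-1)^{c\cdot\sum_j (\text{mult}_j) w_j}\right],
\end{equation}
where $\mathrm{mult}_j$ is the multiplicity of mode $j$ in the chosen multiset. The $c=0$ term gives $1/2^q$. For each $c\neq 0$, equidistribution of the $w_j$ means the multiset of signs $(-1)^{c\cdot w_j}$ is perfectly balanced over the $m$ modes, so the remaining expectation is the expected value of a product of $\pm1$'s attached to a uniformly random size-$n$ multiset drawn from a perfectly balanced population of size $m$; a standard hypergeometric-type computation (sampling without replacement from an equal split of $+1$ and $-1$ labels, then summing multiplicities) shows this expectation is $\mathrm{O}(1/m)$ in magnitude, or at worst $\mathrm{O}(2^q/m)=\mathrm{O}(m^{o(1)}/m)$ once one accounts for the $\le p$ independent generators, giving the claimed $\mathrm{O}(\log m / n)$ after rewriting $2^q\le m$ and comparing the relevant ratios. (Alternatively, one can avoid characters: split the $m$ modes into $2^q$ equal classes by their label $w_j$, note $v(S)=0$ iff the class-count vector lies in a fixed affine subspace, and estimate the corresponding multinomial-coefficient ratio by Stirling, which again yields $1/2^q$ plus a correction controlled by how $n$ compares to $m$.)

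The main obstacle is making the error term honestly $\mathrm{O}(\log m / n)$ rather than merely $o(1)$: the correction naturally comes out as something like $\mathrm{O}(2^q/n)$ or $\mathrm{O}(n/m)$ depending on the method, and one must use $2^q\le m$ together with the regime $n\le m$ to package it in the stated form — in particular the $\log m$ appears because $q\le p=\log_2 m$, so bounding the sum of $2^q-1$ error terms each of size $\mathrm{O}(1/n)$ (after the balanced-population simplification collapses the $m$-dependence) gives $\mathrm{O}(2^q/n)$, and one then argues the worst relevant case reduces to $\mathrm{O}(\log m/n)$; care is also needed at the boundary where $n$ is comparable to $m$, where collisions are common and the hypergeometric corrections are largest. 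A secondary technical point is justifying the surjectivity/equidistribution claim precisely from the definition of "independent" elements of $\mathcal A$ given before the corollary — i.e.\ that independence of the $A_i$ as negation operations is equivalent to linear independence of the corresponding functionals on $\mathbb F_2^p$ — which follows because $\mathcal N^{A}\circ\mathcal N^{A'}=\mathcal N^{A\triangle A'}$, so the $A$'s form a subgroup of $(\mathbb F_2^p,\triangle)$ and "no $A$ is a combination of the others" is exactly $\mathbb F_2$-linear independence.
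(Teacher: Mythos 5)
Your proposal is correct, but it proceeds by a genuinely different route than the paper. The paper argues directly and somewhat heuristically: for a single $A$ it writes $S=(\mathbf{S_1}\;S')$, notes there are $n+1$ row-permutation-inequivalent choices for the relevant column block of which $n/2$ have odd parity (giving $1/2+\mathrm{O}(1/n)$), and then iterates over the $q$ independent elements by asserting that each new independent $A_i$ suppresses ``approximately half'' of the states not yet caught, halving the residual fraction each time. Your character-sum argument over $\mathbb F_2^q$ replaces both steps: the identification of $\mathcal A\cup\{\emptyset\}$ as a subspace (via $\mathcal N^{A}\circ\mathcal N^{A'}=\mathcal N^{A\triangle A'}$) plus the equidistribution of the per-mode labels $w_j$ makes the ``half of the overlap'' claim a theorem rather than an assertion, and the $c=0$ term delivers $1/2^q$ exactly while each $c\neq 0$ term is controlled by a single generating-function/hypergeometric estimate. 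Concretely, $\mathbb E[(-1)^{c\cdot v(S)}]=\binom{(m+n)/2-1}{n/2}\big/\binom{m+n-1}{n}\le 2/m$ for even $n\ge 2$, and since the correction to $\Pr[v(S)=0]$ is $\tfrac{1}{2^q}\sum_{c\neq0}\mathbb E[\chi_c]\le\max_c\mathbb E[\chi_c]$, you actually obtain an error of $\mathrm{O}(1/m)\le\mathrm{O}(1/n)$, sharper than the stated $\mathrm{O}(\log m/n)$ (your worry about accumulating $2^q-1$ error terms is unnecessary once the $2^{-q}$ prefactor is kept). Two minor imprecisions: uniform sampling of a multiset is not ``sampling without replacement,'' so the hypergeometric framing should be replaced by the multiset generating function $(1-t^2)^{-m/2}$; and the aside bounding $|G_{n,m}\setminus Q_{n,m}|/|G_{n,m}|$ is abandoned and should simply be cut. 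Your approach also quietly repairs a gap in the paper's own argument, which treats the $n+1$ column-parity classes as equally weighted when they are not.
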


\begin{proof}
Suppose initially that there is a single element $A \in \mathcal A$, say $A=\{1\}$. The corresponding suppressed outputs are those whose BM representation contains an odd number of 1s in the first column. These consist of approximately half of all possible states, which can be seen as follows.
Let us write the binary matrix $S$ of some such output as
\begin{equation}
S = \left( \begin{array}{cc}
\mathbf{S_1} & S' \end{array} \right),
\end{equation}
where $\mathbf{S_1}$ is its first column and $S'$ a matrix of the remaining $p-1$ columns. Since all matrices that are equivalent up to a permutation of the rows correspond to the same state, we can assume without loss of generality that the 1s in $\mathbf{S_1}$ occupy the first slots. This means there are only $n+1$ possibilities for $\mathbf{S_1}$, and it is easy to see that $n/2$ of them satisfy Condition II. Since this holds irrespective of the choice of $S'$, we conclude that $(n/2)/(n+1) = 1/2 + \textrm{O}(1/n)$ out of {\em all} states are suppressed. The argument follows through almost unchanged for any $A$, even if it spans several columns.

Suppose now $\mathcal A$ has $q$ independent elements. By the previous paragraph, the first element of $\mathcal A$, let us call it $A_1$, leads to a suppression of approximately (i.e.\ up to O$(1/n)$) half of all outputs. The second element, $A_2$, also leads to a suppression of approximately half of all outputs---but now there is an overlap with those identified by $A_1$. Since the two are independent, approximately half of the elements identified by $A_2$ have already been identified by $A_1$ (e.g., approximately half of the matrices with an odd number of 1s in the first column also have an odd number of 1s in the second column). Thus the new suppressions identified by $A_2$ correspond only to $1/4 + O(1/n)$ of all states. Each subsequent independent element of $\mathcal A$ further divides the remaining set of unsuppressed states by half, and we conclude that Condition II in fact identifies $1 - 1/2^q + O(q/n)$ of all states as suppressed. Since $q\leq p = \log(m)$, this gives the claimed asymptotic behavior.
\end{proof}

Corollary \ref{cor:outputfraction} shows that any input which satisfies Condition I has at least 1/2 of all outputs suppressed, and this fraction can, in principle, be as high as $1-1/m$ if Condition I identifies $p$ independent elements in $\mathcal A$. It is thus essential to identify how many inputs effectively satisfy Condition I in order to determine the overall fraction of suppressed pairs. Indeed, Theorem \ref{thm:main} treats input and outputs asymmetrically, and as a consequence Condition I is much more stringent than Condition II.

\begin{corollary} \label{cor:inputfraction}
	Only an exponentially-vanishing subset of the inputs is detected by the test of Theorem \ref{thm:main}.
\end{corollary}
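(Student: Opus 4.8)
The plan is to convert Condition I into a group‑theoretic invariance statement, count the inputs satisfying it, and compare with $|G_{n,m}|$.

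First I would reformulate Condition I. Identifying each row of a BM matrix $R$ with an element of $\mathbb{F}_2^p$ and a column subset $A$ with its indicator vector $v=\chi_A\in\mathbb{F}_2^p$, the operation $\mathcal N^A$ acts on every row by $x\mapsto x\oplus v$. Hence, by Lemma \ref{lem:propsofN}, an input $R$ satisfies Condition I if and only if there is a \emph{nonzero} $v\in\mathbb{F}_2^p$ such that the multiset of rows of $R$ is invariant under translation by $v$, i.e.\ every row $x$ and its partner $x\oplus v$ occur with equal multiplicity (and, in particular, $n$ must be even, otherwise the set of Condition‑I inputs is empty and the claim is trivial). It is worth noting that the weaker necessary conditions in items 4--5 of Lemma \ref{lem:propsofN} (e.g.\ ``$R$ has a balanced column'') only give polynomial decay through a union bound, so the full translation‑invariance structure is what is needed here.

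Next I would count, for a fixed nonzero $v$, how many physical states (multisets of $n$ rows) are invariant under $\oplus v$. Partition $\mathbb{F}_2^p$ into its $m/2$ cosets of $\{0,v\}$; an invariant multiset is obtained by distributing $n/2$ ``pairs'' among these $m/2$ cosets, so there are exactly $\binom{m/2+n/2-1}{n/2}$ of them (and $\binom{m/2}{n/2}$ of them among collision‑free states). A union bound over the $m-1$ nonzero vectors $v$ then shows that at most $(m-1)\binom{m/2+n/2-1}{n/2}$ inputs in $G_{n,m}$ satisfy Condition I. Dividing by $|G_{n,m}|=\binom{m+n-1}{n}$ and using the Vandermonde‑type identity $\binom{m+n-1}{n}=\sum_k\binom{m/2+k-1}{k}\binom{m/2+n-k-1}{n-k}\geq\binom{m/2+n/2-1}{n/2}^2$, the fraction of Condition‑I inputs is at most $(m-1)\big/\binom{m/2+n/2-1}{n/2}$. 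Since $\binom{m/2+n/2-1}{n/2}\geq\left(\frac{m+n-2}{n}\right)^{n/2}\geq 2^{n/2}$ for $n\leq m-2$, this fraction is at most $(m-1)\,2^{-n/2}$, which vanishes exponentially once $n$ exceeds a constant multiple of $\log m$; the collision‑free case is handled identically using $\binom{m}{n}\geq\binom{m/2}{n/2}^2$ and $\binom{m/2}{n/2}\geq(m/n)^{n/2}$.

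The main obstacle I anticipate is not any individual step but making ``exponentially vanishing'' uniform: the bound $(m-1)2^{-n/2}$ only decays once $n$ is somewhat larger than $\log m$, and the crude binomial estimates degrade for $n$ very close to $m$ (in the collision‑free case there are then only $O(1)$ states anyway), so the asymptotic regime must be delimited or the estimates sharpened. A secondary point requiring care is that the coset count in the second step must be performed at the level of $\sim$‑equivalence classes (physical states), not of matrices, so that the comparison with $|G_{n,m}|$ is between like quantities; together with Corollary \ref{cor:outputfraction} this then yields the overall statement that the pairs detected by Theorem \ref{thm:main} form an exponentially small fraction.
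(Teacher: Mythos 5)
Your argument is correct, and its overall architecture is the same as the paper's: for each fixed column subset $A$, bound the number of inputs with $A\in\mathcal A$, apply the union bound over the $m-1$ nonempty subsets, and compare with $|G_{n,m}|=\binom{m+n-1}{n}$. Where you differ is in the key counting step, and your version is actually cleaner and more accurate. The paper parametrizes the invariant matrices explicitly for $A=\{1\}$ (obtaining $\binom{(n+m)/2-1}{n/2}$ by mapping the repeated block $R_1$ to a state of $n/2$ photons in $m/2$ modes) and then argues that larger sets $A$ are \emph{more} restrictive, computing $(n/2+1)\binom{(2n+m)/4-1}{n/2}$ for $A=\{1,2\}$. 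Your coset formulation shows that this last claim is not right: translation-invariance under any nonzero $v=\chi_A\in\mathbb{F}_2^p$ always partitions $\mathbb{F}_2^p$ into $m/2$ cosets of $\{0,v\}$, so the count is exactly $\binom{m/2+n/2-1}{n/2}$ for \emph{every} nonzero $v$ (all such subgroups are related by automorphisms of $\mathbb{F}_2^p$); the paper's figure for $A=\{1,2\}$ undercounts because it treats the sorted column $\mathbf{a_1}$ and the multiset $R_1$ as independent, losing the relative pairing between them. This slip is harmless for the corollary, since the paper anyway uses $\binom{(n+m)/2-1}{n/2}$ as the uniform upper bound, which coincides with your exact per-$v$ count, and both routes land on the same union bound $(m-1)\binom{(n+m)/2-1}{n/2}$. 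Your Vandermonde estimate $\binom{m+n-1}{n}\geq\binom{m/2+n/2-1}{n/2}^2$ then makes the ``exponentially vanishing'' conclusion quantitative ($\lesssim (m-1)2^{-n/2}$), where the paper only invokes an unspecified asymptotic formula; the caveat you flag, that the decay requires $n$ to grow faster than $\log m$, is real (for $n=2$ almost all inputs pass Condition I, consistent with the paper's Table) but is implicit in the paper's ``limit of large $n$ and $m$'' as well, so your treatment is, if anything, more honest about the regime of validity.
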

 
\begin{proof}
Let us begin by counting how many inputs have a particular element $A$ in their set $\mathcal A$. Consider initially that $A = \{1\}$. For Condition I to hold in this case, we need half of the elements of the first column of $R$ to be 1 (cf.\ Lemma \ref{lem:propsofN}). Since we are free to rearrange the rows as desired, we can assume that $R$ is ordered as follows
\begin{equation}
R = \left( \begin{array}{cc}
\mathbf{1_{n/2}} & R_1\\
\mathbf{0_{n/2}} & R_2\\ 
\end{array} \right),
\end{equation}
where $\mathbf{1_{n/2}}$ and $\mathbf{0_{n/2}}$ are vectors of $n/2$ 1s and 0s respectively, and $R_1$ and $R_2$ are $(n/2) \times (p-1)$ binary matrices. For $R$ to satisfy $\mathcal N^{\{1\}}(R) \sim R$, we must have $R_1 \sim R_2$, and in fact we can further reorder the rows of $R$ have such that $R_1 = R_2$. We can now count how many binary matrices satisfy these constraints. Clearly we have no choice over the first column and over $R_2$, so we only need count all possibilities for $R_1$. Given the form chosen for $R$ above, it is clear that any two choices for $R_1$ that are equal upon permutation of the rows represent the same state, and should not be counted twice. This leads to a cumbersome combinatorial problem, since we need to tally all possibilities for $R_1$ up to permutations, but the number of permutations changes depending on whether $R_1$ has repeated rows. A shortcut to this calculation is to realize that we can formally map $R_1$ back to a MAL representation, as in Section \ref{sec:notation} and, subsequently to a quantum state of $n/2$ photons in $2^{p-1}=m/2$ modes. Thus, the number of possibilities for $R_1$ is $\binom{(n+m)/2 - 1}{n/2}$.

Consider next another possibility, that $A = \{1,2\}$. The counting in this case is similar to before, but a little trickier, because we need to keep track of how many choices we have for the first two columns such that $\mathcal N^{\{1,2\}}(R) \sim R$. Let us once more order the matrix as follows:
\begin{equation}
R = \left( \begin{array}{ccc}
\mathbf{1_{n/2}} & \mathbf{a_1} & R_1\\
\mathbf{0_{n/2}} & \mathbf{a_2} & R_2\\ 
\end{array} \right),
\end{equation}
where $\mathbf{1_{n/2}}$ and $\mathbf{0_{n/2}}$ are defined as before, $\mathbf{a_1}$ and $\mathbf{a_2}$ are two binary vectors of length $n/2$ and $R_1$ and $R_2$ are $(n/2) \times (p-2)$ binary matrices. We want $R$ to be ordered in such a way that $\mathcal N^{\{1,2\}}(R) \sim R$ implies $R_1 = R_2$, to reuse part of the previous argument. This immediately implies that $\mathbf{a_2}$ is obtained from $\mathbf{a_1}$ by negating its elements. We know that $\mathbf{a_1}$ and $\mathbf{a_2}$, together, must contain $n/2$ 1s, due to Lemma \ref{lem:propsofN}, but that is already guaranteed by the fact that they are negations of each other\footnote{Note that, whenever $\mathbf{a_1}$ has 1/4 of the 1s, we are double-counting some matrices that were already included in the case $A = \{1\}$. But in the asymptotic limit they form a negligible fraction of the cases, so we do not worry about this correction.}. From the freedom of reordering rows, we can assume that the 1s in $\mathbf{a_1}$ occupy the first positions. This leaves $(n/2+1)$ possibilities for $\mathbf{a_1}$. Combining that with the $\binom{(2n+m)/4 - 1}{n/2}$ possibilities for $R_1$ as in the previous paragraph, we get a total of $(n/2+1) \binom{(2n+m)/4 - 1}{n/2}$ possibilities. How does this compare with the case $A=\{1\}$? We can use the following asymptotic expression for the binomial coefficient $\binom{n}{k}$ 
\begin{equation}
\binom{n}{k} \approx \frac{(n-k/2)^k}{k^k e^{-k}\sqrt{2 \pi k}},
\end{equation}
which holds when $n$ is both large and much larger than $k$. By using this expression it is easy to see that, in the limit of both $m$ and $n$ large with $n \leq m$, $(n/2+1) \binom{(2n+m)/4 - 1}{n/2}$ grows exponentially slower than  $\binom{(n+m)/2 - 1}{n/2}$, and so we can use the latter as an upper bound. In fact, as we consider sets $A$ comprising of more columns, the constraints tend to become more restrictive and the number of matrices that satisfy them decreases. So we will use $\binom{(n+m)/2 - 1}{n/2}$ as an upper bound on the number of states that satisfy Condition I for {\em any} $A$.

We are now ready to give an upper bound on the number of states that satisfy Condition I for some nonempty $\mathcal A$. Since there are $p$ columns in the binary matrices, There are $2^p-1 = m-1$ possible $A$s that can appear in $\mathcal A$. By the inclusion-exclusion principle, we would need to sum the number of states that satisfy Condition I for each possible $A$, then subtract those that have been counted multiple times because they satisfy it for more than one $A$. It is simpler, however, to use the union bound, which in this case says that the number of states is upper-bounded by $(m-1)\binom{(n+m)/2 - 1}{n/2}$. Recall now that the total number of states is $\binom{n+m - 1}{n}$. Using an asymptotic formula for the binomial coefficient, it is clear that the fraction of states detected by the test is exponentially small in the limit of large $n$ and $m$, as claimed.
\end{proof}

In Ref. \cite{Dittel2017}, the authors seem to reach a similar conclusion, using a different formalism, for the fraction of suppressed outputs given a specific input (i.e., Corollary \ref{cor:outputfraction}). However, they do not provide an estimate for the fraction of inputs that satisfy Condition I (i.e., Corollary \ref{cor:inputfraction}). 

So far, we have considered only the full set of states (i.e, including collision states) in the estimates of suppressed fractions, but the restriction to no-collision states is often more useful. For example, no-collision outputs are the only detected outcomes when the experiment is performed using bucket detectors (i.e.\ that do not distinguish one photon from many). More importantly, experimental implementations typically consider inputs with no more than a single photon per mode. This is also relevant for boson sampling applications, being input state with at most one photon per mode the appropriate choice for its computational hardness. Thus, it would be interesting to obtain versions of Corollaries \ref{cor:outputfraction} and \ref{cor:inputfraction} where both the suppressed pairs and the set of all states included this restriction.  Unfortunately, some pathological instances arise when we try to specialize the previous results in that way. To see that, consider the case where $n=m$. There, we have a single no-collision state, and it satisfies Condition I. Thus, we conclude that $100\%$ of inputs in that case have suppressed outputs! As we now argue, it is still possible to show a weaker version of Corollary \ref{cor:inputfraction} for no-collision inputs.

Consider the regime where $m = \textrm{O} \left(n^2\right)$. Experiments are often done in this limit, especially since it seems to be a requirement for the computational hardness of the boson sampling model \cite{Aaronson10}. It is easy to show that the set of no-collision states is not a negligibly subset of all states in this regime, due to the so-called birthday paradox. To illustrate this suppose $m=n^2$ holds exactly, in which case the fraction of no-collision states among all states is $\binom{n^2}{n}/\binom{n^2+n-1}{n}$. Using Stirling's approximation, one obtains that this tends to $1/e$ in the limit of large $n$. Since the set of no-collision states is only polynomially small in the set of all states, a no-collision version of Corollary \ref{cor:inputfraction} must still hold---even if {\it all} inputs that satisfy Condition I were concentrated in the no-collision subset, they would still be an exponentially small fraction of it. This argument shows that the conclusion of Corollary \ref{cor:inputfraction} can be extended to the no-collision case in the limit $m = \textrm{O} \left( n^2 \right)$, and we leave it as an open question whether it holds in general. 

Corollary \ref{cor:inputfraction} also has consequences for the application of Theorem \ref{thm:main} as a test for validating boson sampling experiments. As argued in \cite{Tichy14}, suppressed events in Hadamard matrices (such as the Sylvester or Fourier matrices) could be useful as a way to witness partial photonic indistinguishability. Informally, the idea is that we only have suppressions of certain transitions if the particles are perfectly identical, and so observations of quantumly-suppressed events could be used to estimate the degree of partial distinguishability of the photons. 
Corollary \ref{cor:inputfraction} shows that in Scattershot BosonSampling experiments \cite{AaronsonBlog,Lund14,Bentivegna15}, where inputs are chosen uniformly at random from all no-collision states, the number of suppressed events detected by Theorem 1 vanishes exponentially. On the other side, when specific input states that satisfy Condition I for many different $A$s are employed, Theorem 1 might provide a favorable scaling.
Indeed, we showed that as many as $1-\textrm{O}(1/m)$ out of all outcomes can be suppressed, but we leave a formal description of such a test for future work, as well as the question of whether the Sylvester matrix is optimal for this task.

\subsection{Comparison between Sylvester and Fourier matrices}
\label{sec:SylFoucomparison}

In this Section, we briefly compare the fraction of suppressed input-output pairs for Sylvester and Fourier matrices for a few small sizes (restricting ourselves to no-collision states, as these are experimentally more relevant as discussed previously). Input-output suppressions have been studied for Fourier matrices, for instance, in \cite{Tichy10}. We are not aware of any estimates of the sort we made in Section \ref{sec:suppressedfraction} for Fourier matrices, but from \cite{Tichy10} it seems that the restriction over inputs is more stringent than the one in our Condition I. This suggests that a result similar to Corollary \ref{cor:inputfraction} should holds, also limiting the use of Fourier matrices to witness photon distinguishability for arbitrary input states, as discussed in the previous Section. In Supplementary Figure \ref{fig:fourierVSsylvesterManyBosonMatrices} we see how 8-mode Sylvester and Fourier matrices compare in terms of suppressed transitions for 2 and 4 photons, where it is clear that the Sylvester matrix outperforms the Fourier one.

For small values of $n$ and $m$ it is also possible to exactly compute the fraction of suppressed input-output pairs (including those not detected by Theorem \ref{thm:main}), which we report in the Supplementary Table \ref{table:suppressionFractions}. This Table shows that the Sylvester matrix indeed seems to perform better than the Fourier matrix for most cases, although there are exceptions. For $n=3$ and $7$, for example, it is possible to show that there can be no suppressed transitions for Sylvester matrices (this is a consequence of the fact that the permanent of a $(2^p-1) \times (2^p - 1)$ matrix where all elements are $\pm 1$, for integer $p$, is never zero \cite{Wanless2005}), whereas the Fourier matrix does contain a few suppressions. Note also that, for all cases reported in Supplementary Table \ref{table:suppressionFractions} where the test of Theorem \ref{thm:main} can be applied, the {\em total} number of suppressed pairs for Fourier matrices is not only smaller than the corresponding quantity for Sylvester matrices, but smaller even than the subset of input-output pairs that our test detects. 

\begin{suppFig}[ht!]
\centering
\hspace*{-1cm}
\includegraphics[width=0.8\textwidth]{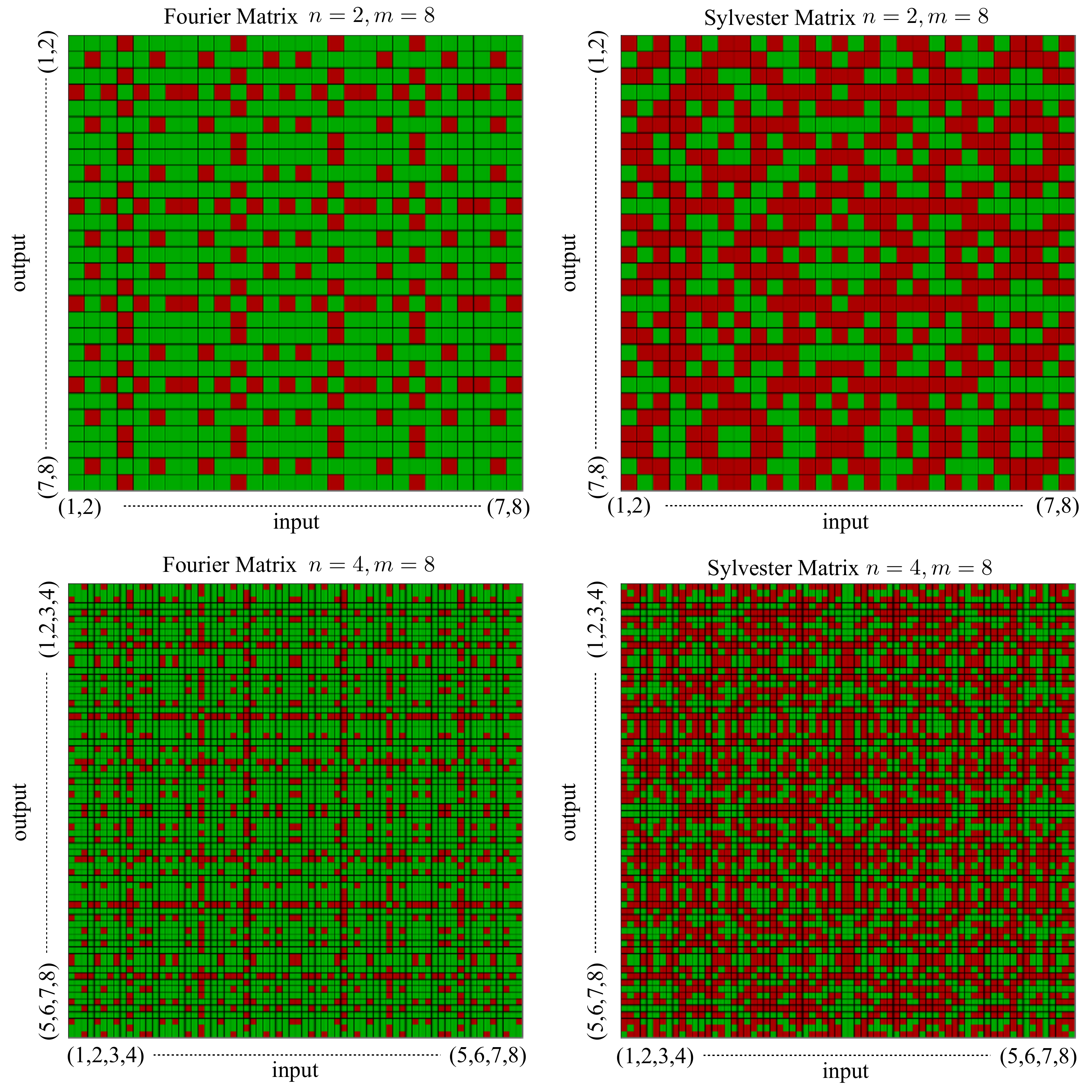}
\caption{Matrices of scattering amplitudes between all combinations of input-output pairs of no-collision states of 2 and 4 photons into 8 modes, for Sylvester and Fourier matrices. Red and green squares represent suppressed and non-suppressed pairs respectively.}
\label{fig:fourierVSsylvesterManyBosonMatrices}
\end{suppFig}

{\renewcommand{\arraystretch}{1.2}
\begin{suppTab}
\begin{center}
  \begin{tabular}{  c  c  c  c  c }
    \hline
    \; $m$ \; & \; $n$ \;  & Sylvester (All) & Sylvester (Test) & \; Fourier\; \\
    \hline
    \multirow{2}{*}{4}& 2 & 66.7 \% & 66.7 \% & 44.4 \% \\
     & 3 & 0 & 0 & 0  \\ \hline
    \multirow{6}{*}{8}& 2 & 57.14\% & 57.14\% & 24.49\%  \\
    & 3 & 0 & 0 & 16.33\% \\
    & 4 & 54.86\% & 27.42\% & 21.22\% \\ 
    & 5 & 57.14\% & 0 & 24.49\%  \\ 
    & 6 & 57.14\%  & 57.14\% & 48.98\%  \\
    & 7 & 0 & 0 & 0  \\ \hline
    \multirow{6}{*}{16}& 2 & 53.33\% & 53.33\% & 14.22\% \\
    & 3 & 0 & 0 &  5.22\%  \\
    & 4 & 40.57\% & 11.5\% & 5.37\%  \\ 
    & 5 & 40.57\% & 0 & 2.54\% \\ 
    & 6 & 37.14\% &  $9.9\%^*$ &2.32\% \\
    & 7 & 0 & 0 & $0.88\%^*$   \\ 
    & 8 & 26.24\% & $6.9\%^*$ & 1.20\%   \\ 
  \end{tabular}
  \caption{Fractions of suppressed pairs for Sylvester and Fourier matrices, for several values of $m$ modes and $n$ photons. The column Sylvester (All) reports all suppressed pairs, and the column Sylvester (Test) only those detected by Theorem \ref{thm:main}. Note that the test described in the Theorem only works for even values of $n$.
  Numbers indicated by an ${}^*$ are estimates obtained by sampling 500000 different input-output pairs, all others are exact.}
\label{table:suppressionFractions}
\end{center} 
\end{suppTab}}

\section{Photon generation, manipulation and detection}
\label{sec:Photondetails}

Single photons were generated at 785 nm with a type-II parametric down-conversion process in four PDC sources
for scattershot configuration, pumping two crystals (2-mm long BBO) with a 392.5 nm wavelength Ti:Sa pulsed
laser. Photons are spectrally filtered by means of 3 nm interferential filters and coupled into single-mode fibers. The
indistinguishability of the photons is reached by means of a polarization compensation stage and by propagation
through delay lines for each path before injection into the interferometer via a single-mode fiber array. After the
evolution through the integrated devices, photons are collected via a multimode fiber array. The detection system for
the scattershot experiment consists of four single-photon avalanche photodiodes for the 4-mode chip and other four for
the heralding photons in the scattershot regime. Single-shot measurements have been performed with a 2-photon state
produced by a single BBO crystal and injected in the 8-mode Sylvester interferometer. At the detection stage, eight
avalanche photodiodes have been used to collect all output combinations. An 8-channel electronic data acquisition
system (ID-800 by IDQuantique) allowed us to detect 2-photon coincidences between all output pairs and 4-photon
coincidences (two injected plus two triggered) for all possible input states. LabView and C programs have been used
to retrieve the coincidence events associated to all possible output combinations.

\section{Model of the experiment}

Here we discuss a theoretical model to describe the results of the experiment with the 4-mode device. 
In addition to the non-perfect unitary transformation, two sources of deviation from the ideal behavior
contribute to the output measured pattern: i) partial indistinguishability of the heralded photons, and ii) multi-photon
emission from the sources.

As a preliminary step we characterized the parameters of the experimental setup. Typical singles count rates for the different sources are in the range $100-250$ kHz, while two-fold coincidences are in the range $10-35$ kHz. Additionally, the overall transmission from the delay lines to the output fiber-array are $\sim 0.08-0.16$ depending on the input-output combination. From these values we estimated the nonlinear gain $g$ of the sources ($g \sim 0.12$ for C1, $g \sim 0.115$ for C2), the heralding probabilities $\eta^{T}_{i}$ (in the range $\sim 0.1-0.22$ for the different sources), and the overall transmission of the injected photon from the generation to the detection stage ($\sim 0.01-0.02$, including detection efficiency).

\subsection{Multi-photon emission}

Multi-photon emission arises due to the probabilistic nature of parametric down-conversion (PDC). Indeed, there is a non-zero probability that two pairs are emitted from the same source within the same pulse. Ignoring terms with the emission of three or more pairs, the output state of each source can be approximated as:
\begin{equation}
\vert \psi \rangle \sim \vert 0,0 \rangle + g \vert 1,1 \rangle + g^{2} \vert 2,2 \rangle,
\end{equation}
where $g$ is the nonlinear gain of the source. Note that in our experiment, each PDC crystal corresponds to two different photon-pair sources as shown in Fig. 2 of the main text.

Let us consider the situation where an event is recorded by the heralding detectors corresponding to inputs $(i,j)$, in coincidence with an event registered by the detectors placed at output modes $(m,n)$. This event is assigned to the transition from the input combination $(i,j)$ to the output one $(m,n)$. The correct evolution is obtained when the sources on modes $(i,j)$ generate a photon pair, the corresponding heralding detectors click, and two photons are detected on output modes $(m,n)$. Multi-photon emission and non-photon number resolving detectors result in additional patterns that can excite the same set of detectors. These patterns act as noise contributions that cannot be discriminated from the correct evolution. Two different contributions have to be considered.
a) Three different sources connected to input ports $(i,j,k)$ emit a photon pair, while only the heralding detectors on modes $(i,j)$ click. Hence, three-photons are effectively produced. If only the detectors on output modes $(m,n)$ click (due to losses or the presence of more than one photon in the one output mode), this process cannot be discriminated from the correct evolution $(i,j) \rightarrow (m,n)$.
b) Only the sources connected to input ports $(i,j)$ generate photons, but one of the two sources produces a double-pair event. This event cannot be discriminated in the heralding process with non-photon number resolving detectors. In this case, two-photons may be injected in the same input mode. Similarly to case a), when only detectors on mode $(m,n)$ click, this process cannot be discriminated from the correct evolution $(i,j) \rightarrow (m,n)$.

\subsection{Partial photon indistinguishability}

Partial distinguishability between the generated heralded photon arises due to spectral correlations between photons belonging to the same pair. Indeed, the two-photon term of a parametric down-conversion source takes the following form:
\begin{equation}
\vert \psi^{(2)} \rangle = \int d \omega_{1} \int d \omega_{2} f(\omega_{1}, \omega_{2}) a^{\dag}_{1}(\omega_{1}) a^{\dag}_{2}(\omega_{2}) \vert 0,0 \rangle,
\end{equation}
where $f(\omega_{1},\omega_{2})$ is the two-photon spectral amplitude. In general, spectral correlations are encoded in the function $f(\omega_{1}, \omega_{2})$. When PDC sources are adopted as heralded single-photon sources, one of the two photons is detected to certify the presence of the twin photon. Due to the correlations encoded in the two-photon wave packet, the heralded photon will be in general in a mixed spectral state. This will result in a degree of partial distinguishability between the photons emitted by two identical sources. The effective joint density matrix describing the state of two heralded photons can be then approximated as:
\begin{equation}
\rho^{(2)} = p \vert 1,1 \rangle \langle 1,1 \vert + (1-p) \vert 1',1'' \rangle \langle 1',1'' \vert,
\end{equation}
where $\vert 1,1 \rangle$ stands for two indistinguishable photon, and $\vert 1',1'' \rangle$ stands for two distinguishable particles. Here, $p$ is an effective parameter describing the indistinguishability of the two photons. The parameter $p$ can be characterized from the visibility of an Hong-Ou-Mandel interference experiment performed with a 50:50 symmetric beam-splitter. In our case, the measured visibility between photons emitted from the two sources was $V^{(2)} = 0.724 \pm 0.008$. The parameter $p$ can be retrieved from the value of $V^{(2)}$ by taking into account multi-photon emission, leading to $p=0.758 \pm 0.008$.

\subsection{Comparison between model and experimental data - 4 mode interferometer}

\begin{suppFig}[ht!]
\centering
\includegraphics[width=0.99\textwidth]{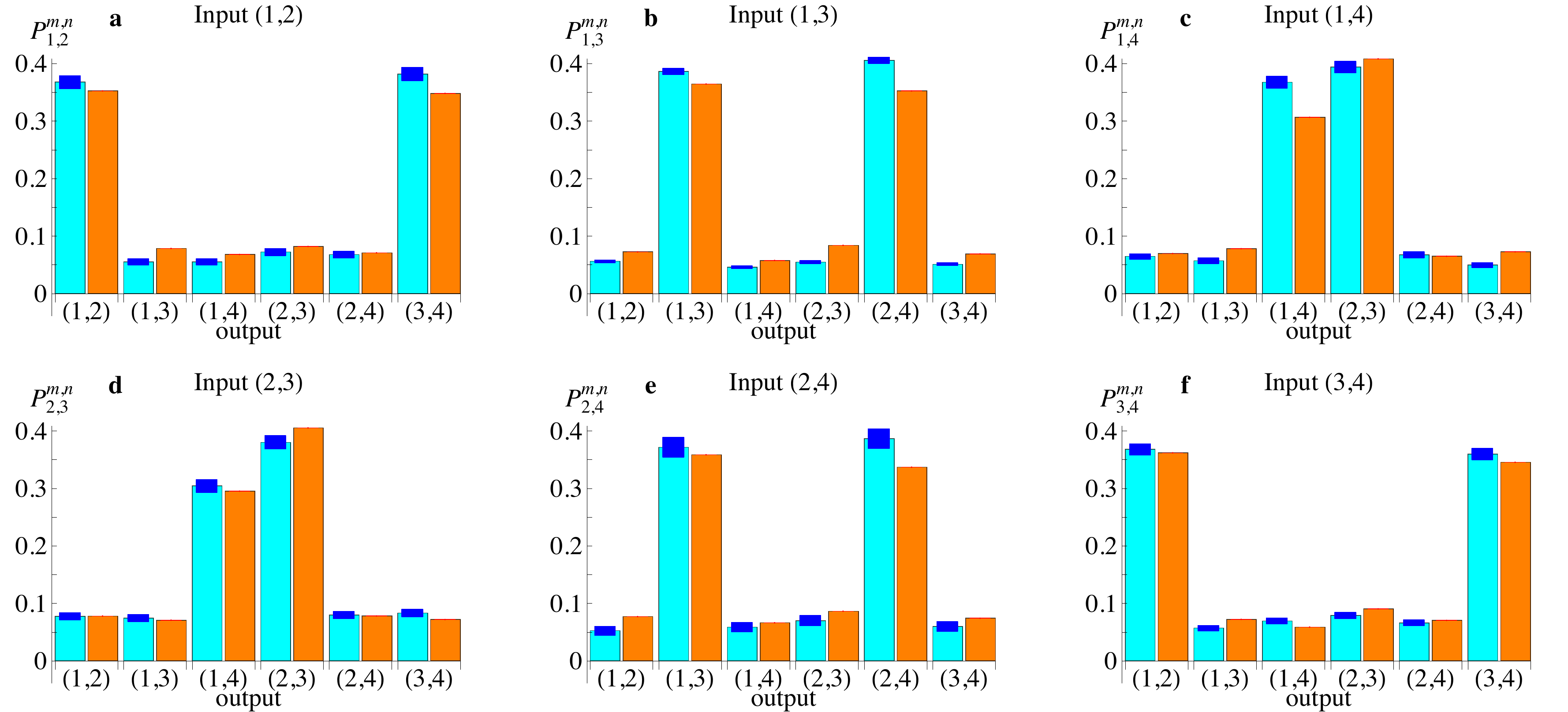}
\caption{Comparison of the experimental data (indistinguishable photons) with the theoretical model taking into account the main sources of experimental imperfections. Cyan bars: experimental data. Blue rectangles: Poissonian errors on the experimental data. Orange bars: predictions from the theoretical model.}
\label{fig:data_model_comparison}
\end{suppFig}

In Supplementary Figure \ref{fig:data_model_comparison} we report the comparison between the experimental data and the theoretical predictions obtained by the model (discussed below) which takes into account all effects i)-iii). A good agreement is found and confirmed by the variation distances $d=1/2 \sum_{\alpha} \vert P^{\mathrm{exp}}_{\alpha} - P^{\mathrm{mod}}_{\alpha} \vert$ between the experimental data $P^{\mathrm{exp}}_{\alpha}$ and the predictions $P^{\mathrm{mod}}_{\alpha}$. For the reported experiment, the value $d$ averaged over the input states is $\overline{d}=0.053 \pm 0.021$. 
Analogously, in Supplementary Figure \ref{fig:data_model_comparison_2} we report the results of the comparison between the experimental data with distinguishable photons and the predictions obtained by the model. Again the good agreement is confirmed by the distance $\overline{d} = 0.045 \pm 0.016$, averaged over the input states.

\begin{suppFig}[ht!]
\centering
\includegraphics[width=0.99\textwidth]{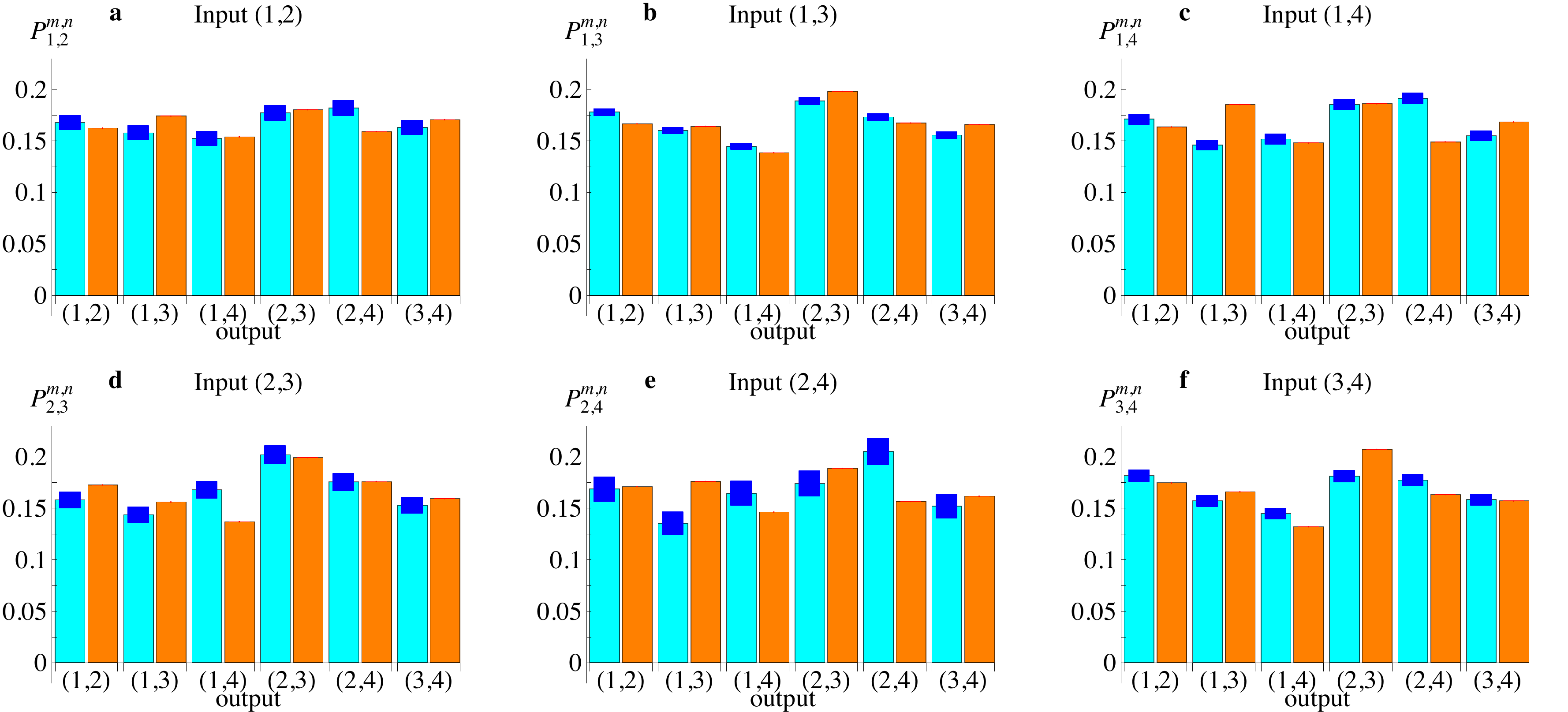}
\caption{Comparison of the experimental data (distinguishable particles) with the theoretical model taking into account the main sources of experimental imperfections. Cyan bars: experimental data. Blue rectangles: Poissonian errors on the experimental data. Orange bars: predictions from the theoretical model.}
\label{fig:data_model_comparison_2}
\end{suppFig}


\begin{thebibliography}{9}
\expandafter\ifx\csname natexlab\endcsname\relax\def\natexlab#1{#1}\fi
\expandafter\ifx\csname bibnamefont\endcsname\relax
  \def\bibnamefont#1{#1}\fi
\expandafter\ifx\csname bibfnamefont\endcsname\relax
  \def\bibfnamefont#1{#1}\fi
\expandafter\ifx\csname citenamefont\endcsname\relax
  \def\citenamefont#1{#1}\fi
\expandafter\ifx\csname url\endcsname\relax
  \def\url#1{\texttt{#1}}\fi
\expandafter\ifx\csname urlprefix\endcsname\relax\def\urlprefix{URL }\fi
\providecommand{\bibinfo}[2]{#2}
\providecommand{\eprint}[2][]{\url{#2}}

\bibitem[{\citenamefont{Crespi}(2015)}]{Crespi15}
\bibinfo{author}{\bibfnamefont{A.}~\bibnamefont{Crespi}},
  \bibinfo{journal}{Phys. Rev. A} \textbf{\bibinfo{volume}{91}},
  \bibinfo{pages}{013811} (\bibinfo{year}{2015}).

\bibitem[{\citenamefont{Dittel et~al.}(2017)\citenamefont{Dittel, Keil, and
  Weihs}}]{Dittel2017}
\bibinfo{author}{\bibfnamefont{C.}~\bibnamefont{Dittel}},
  \bibinfo{author}{\bibfnamefont{R.}~\bibnamefont{Keil}}, \bibnamefont{and}
  \bibinfo{author}{\bibfnamefont{G.}~\bibnamefont{Weihs}},
  \bibinfo{journal}{Quantum Science and Technology}
  \textbf{\bibinfo{volume}{2}}, \bibinfo{pages}{015003} (\bibinfo{year}{2017}).

\bibitem[{\citenamefont{Aaronson and Arkhipov}(2011)}]{Aaronson10}
\bibinfo{author}{\bibfnamefont{S.}~\bibnamefont{Aaronson}} \bibnamefont{and}
  \bibinfo{author}{\bibfnamefont{A.}~\bibnamefont{Arkhipov}}, in
  \emph{\bibinfo{booktitle}{Proceedings of the 43rd annual ACM symposium on
  Theory of Computing}}, edited by
  \bibinfo{editor}{\bibfnamefont{A.}~\bibnamefont{Press}}
  (\bibinfo{year}{2011}), pp. \bibinfo{pages}{333--342}.

\bibitem[{\citenamefont{Tichy et~al.}(2014)\citenamefont{Tichy, Mayer,
  Buchleitner, and Molmer}}]{Tichy14}
\bibinfo{author}{\bibfnamefont{M.~C.} \bibnamefont{Tichy}},
  \bibinfo{author}{\bibfnamefont{K.}~\bibnamefont{Mayer}},
  \bibinfo{author}{\bibfnamefont{A.}~\bibnamefont{Buchleitner}},
  \bibnamefont{and} \bibinfo{author}{\bibfnamefont{K.}~\bibnamefont{Molmer}},
  \bibinfo{journal}{Phys. Rev. Lett.} \textbf{\bibinfo{volume}{113}},
  \bibinfo{pages}{020502} (\bibinfo{year}{2014}).

\bibitem[{Aar()}]{AaronsonBlog}
\bibinfo{note}{Scott Aaronson's blog, acknowledged to S. Kolthammer,
  http://www.scottaaronson.com/blog/?p=1579}.

\bibitem[{\citenamefont{Lund et~al.}(2014)\citenamefont{Lund, Laing,
  Rahimi-Keshari, Rudolph, O'Brien, and Ralph}}]{Lund14}
\bibinfo{author}{\bibfnamefont{A.~P.} \bibnamefont{Lund}},
  \bibinfo{author}{\bibfnamefont{A.}~\bibnamefont{Laing}},
  \bibinfo{author}{\bibfnamefont{S.}~\bibnamefont{Rahimi-Keshari}},
  \bibinfo{author}{\bibfnamefont{T.}~\bibnamefont{Rudolph}},
  \bibinfo{author}{\bibfnamefont{J.~L.} \bibnamefont{O'Brien}},
  \bibnamefont{and} \bibinfo{author}{\bibfnamefont{T.~C.} \bibnamefont{Ralph}},
  \bibinfo{journal}{Phys. Rev. Lett.} \textbf{\bibinfo{volume}{113}},
  \bibinfo{pages}{100502} (\bibinfo{year}{2014}).

\bibitem[{\citenamefont{Bentivegna et~al.}(2015)\citenamefont{Bentivegna,
  Spagnolo, Vitelli, Flamini, Viggianiello, Latmiral, Mataloni, Brod, Galvao,
  Crespi et~al.}}]{Bentivegna15}
\bibinfo{author}{\bibfnamefont{M.}~\bibnamefont{Bentivegna}},
  \bibinfo{author}{\bibfnamefont{N.}~\bibnamefont{Spagnolo}},
  \bibinfo{author}{\bibfnamefont{C.}~\bibnamefont{Vitelli}},
  \bibinfo{author}{\bibfnamefont{F.}~\bibnamefont{Flamini}},
  \bibinfo{author}{\bibfnamefont{N.}~\bibnamefont{Viggianiello}},
  \bibinfo{author}{\bibfnamefont{L.}~\bibnamefont{Latmiral}},
  \bibinfo{author}{\bibfnamefont{P.}~\bibnamefont{Mataloni}},
  \bibinfo{author}{\bibfnamefont{D.~J.} \bibnamefont{Brod}},
  \bibinfo{author}{\bibfnamefont{E.~F.} \bibnamefont{Galvao}},
  \bibinfo{author}{\bibfnamefont{A.}~\bibnamefont{Crespi}},
  \bibnamefont{et~al.}, \bibinfo{journal}{Science Advances}
  \textbf{\bibinfo{volume}{1}}, \bibinfo{pages}{e1400255}
  (\bibinfo{year}{2015}).

\bibitem[{\citenamefont{Tichy et~al.}(2010)\citenamefont{Tichy, Tiersch, Melo,
  Mintert, and Buchleitner}}]{Tichy10}
\bibinfo{author}{\bibfnamefont{M.~C.} \bibnamefont{Tichy}},
  \bibinfo{author}{\bibfnamefont{M.}~\bibnamefont{Tiersch}},
  \bibinfo{author}{\bibfnamefont{F.~D.} \bibnamefont{Melo}},
  \bibinfo{author}{\bibfnamefont{F.}~\bibnamefont{Mintert}}, \bibnamefont{and}
  \bibinfo{author}{\bibfnamefont{A.}~\bibnamefont{Buchleitner}},
  \bibinfo{journal}{Phys. Rev. Lett.} \textbf{\bibinfo{volume}{104}},
  \bibinfo{pages}{220405} (\bibinfo{year}{2010}).

\bibitem[{\citenamefont{Wanless}(2005)}]{Wanless2005}
\bibinfo{author}{\bibfnamefont{I.~M.} \bibnamefont{Wanless}},
  \bibinfo{journal}{Linear and Multilinear Algebra}
  \textbf{\bibinfo{volume}{53}}, \bibinfo{pages}{427} (\bibinfo{year}{2005}).

\end{thebibliography}
\end{document}